\newcommand{\spara}[1]{\smallskip\noindent{\bf #1}}
\newtheorem{definition}{Definition}
\DeclareMathOperator*{\argmax}{arg\,max}
\newcommand{\SBCN}{\textsf{SBCN}}
\newcommand{\squishlist}{
 \begin{list}{$\bullet$}
  {  \setlength{\itemsep}{0pt}
     \setlength{\parsep}{3pt}
     \setlength{\topsep}{3pt}
     \setlength{\partopsep}{0pt}
     \setlength{\leftmargin}{2em}
     \setlength{\labelwidth}{1.5em}
     \setlength{\labelsep}{0.5em}
} }
\newcommand{\squishlisttight}{
 \begin{list}{$\bullet$}
  { \setlength{\itemsep}{0pt}
    \setlength{\parsep}{0pt}
    \setlength{\topsep}{0pt}
    \setlength{\partopsep}{0pt}
    \setlength{\leftmargin}{2em}
    \setlength{\labelwidth}{1.5em}
    \setlength{\labelsep}{0.5em}
} }
\newcommand{\squishdesc}{
 \begin{list}{}
  {  \setlength{\itemsep}{0pt}
     \setlength{\parsep}{3pt}
     \setlength{\topsep}{3pt}
     \setlength{\partopsep}{0pt}
     \setlength{\leftmargin}{1em}
     \setlength{\labelwidth}{1.5em}
     \setlength{\labelsep}{0.5em}
} }
\newcommand{\squishend}{
  \end{list}
}
\newcommand{\DAG}{\textsc{dag}}
\newcommand{\Probab}[1]{\mathcal{P}({#1})}
\newcommand{\Pcond}[2]{\Probab{{#1}\mid{#2}}}
\newcommand{\Pconj}[2]{\Probab{{#1} \wedge {#2}}}
\begin{document}
\title{Exposing the Probabilistic Causal Structure of Discrimination}

\numberofauthors{1}
\author{
Francesco Bonchi$^{1,2}$ \hspace{3mm}  Sara Hajian$^{2}$ \hspace{3mm}  Bud Mishra$^{3}$ \hspace{3mm}  Daniele Ramazzotti$^{4}$\\ \\
\begin{tabular}{cccc}
\affaddr{$^1$Algorithmic Data Analytics Lab}  & \affaddr{$^2$EURECAT} & \affaddr{$^3$New York University} &   \affaddr{$^4$Milano-Bicocca University}\\
\affaddr{ISI Foundation, Turin, Italy} & \affaddr{Barcelona, Spain} & \affaddr{New York, USA} & \affaddr{Milan, Italy}\\
 \sf{francesco.bonchi@isi.it} & \sf{sara.hajian@eurecat.org} &  \sf{mishra@nyu.edu} &  \sf{daniele.ramazzotti@disco.unimib.it}
\end{tabular}
}

\maketitle
\sloppy

\begin{abstract}

\emph{Discrimination discovery} from data is an important task aiming at identifying patterns of illegal and unethical discriminatory activities against protected-by-law groups, e.g.,~ethnic minorities. While any legally-valid proof of discrimination requires evidence of causality, the state-of-the-art methods are essentially correlation-based, albeit, as it is well known, correlation does not imply causation.

In this paper we take a principled causal approach to the data mining problem of discrimination detection in databases. Following Suppes' \emph{probabilistic causation theory}, we define a method to extract, from a dataset of historical decision records, the causal structures existing among the attributes in the data. The result is a type of constrained Bayesian network, which we dub \emph{Suppes-Bayes Causal Network} (\SBCN).
Next, we develop a toolkit of methods based on random walks on top of the \SBCN,  addressing different anti-discrimination legal concepts, such as direct and indirect discrimination, group and individual discrimination,  genuine requirement, and favoritism. Our experiments on real-world datasets confirm the inferential power of our approach in all these different tasks.
\end{abstract}

\section{Introduction} \label{sec:introduction}

\spara{The importance of discrimination discovery.}
At the beginning of 2014, as an answer to the growing concerns about the role played by data mining algorithms in decision-making, USA President Obama called for a 90-day review of big data collecting and analysing practices. The resulting report\footnote{\url{http://www.whitehouse.gov/sites/default/files/docs/big_data_privacy_report_may_1_2014.pdf}} concluded that \emph{``big data technologies can cause societal harms beyond damages to privacy"}. In particular, it expressed concerns about the possibility that decisions informed by big data could have discriminatory effects, even in the absence of discriminatory intent, further imposing less favorable treatment to already disadvantaged groups.

Discrimination refers to an unjustified distinction of individuals based on their membership, or perceived membership, in a certain group or category. Human rights laws prohibit discrimination on several grounds, such as gender, age, marital status, sexual orientation, race, religion or belief, membership in a national minority, disability or illness. Anti-discrimination authorities (such as equality enforcement bodies, regulation boards, consumer advisory councils) monitor, provide advice, and report on discrimination compliances based on investigations and inquiries. A fundamental role in this context is played by \emph{discrimination discovery in databases}, i.e.,  the data mining problem of unveiling discriminatory practices by analyzing a dataset of historical decision records.


\spara{Discrimination is causal.}
According to current legislation, discrimination occurs when a group is treated ``less favorably''~\cite{legislationequal} than others, or when ``a higher proportion of people not in the group is able to comply'' with a qualifying criterion~\cite{Europeandisc}. Although these definitions do not directly imply causation, as stated in~\cite{foster2004causation} all discrimination claims require plaintiffs to demonstrate a causal connection between the challenged outcome and a protected status characteristic. In other words, in order to prove discrimination, authorities must answer the counterfactual question: what would have happened to a member of a specific group (e.g., nonwhite), if he or she had been part of another group (e.g., white)?

``The Sneetches'', the popular satiric tale\footnote{\url{http://en.wikipedia.org/wiki/The_Sneetches_and_Other_Stories}} against  discrimination published in 1961 by Dr. Seuss, describes a society of  yellow creatures divided in two races: the ones with a green star on their bellies, and the ones without. The  Star-Belly Sneetches have some privileges that are instead denied to Plain-Belly Sneetches. There are, however, Star-On and Star-Off machines that can make a Plain-Belly into a Star-Belly, and viceversa. Thanks to these machines, the causal relationship between race and privileges can be clearly measured, because stars can be placed on or removed from any belly, and multiple outcomes can be observed for an individual. Therefore, we could readily answer the counterfactual question, saying with certainty what would have happened to a Plain-Belly Sneetch had he or she been a Star-Belly Sneetch.

In the real world however, proving discrimination episodes is much harder, as we cannot manipulate race, gender, or sexual orientation of an individual. This highlights the need to assess discrimination as a causal inference problem~\cite{dabady2004measuring} from a database of past decisions,  where causality can be inferred probabilistically.
Unfortunately, \emph{the state of the art of data mining methods for discrimination discovery in databases
 does not properly address the causal question}, as it is mainly based on correlation-based methods (surveyed in Section \ref{sec:related_work}).

\spara{Correlation is not causation.} 
It is well known that correlation between two variables does not necessarily imply that one causes the other.
Consider a unique cause $X$ of two effects, $Y$ and $Z$: if we do not take into account $X$, we might derive wrong conclusions because of the observable correlation  between $Y$ and $Z$. In this situation, $X$ is said to act as a \emph{confounding factor} for the relationship between $Y$ and $Z$.

Variants of the complex relation just discussed can arise even if, in the example, $X$ is not the actual cause of either $Y$ or $Z$, but it is only correlated to them, for instance, because of how the data were collected.
Consider for instance a credit dataset where there exists high correlation between a variable representing low income and another variable representing loan denial and let us assume that this is due to an actual legitimate causal relationship in the sense that, legitimately, a loan is denied if the applicant has low income. Let us now assume that we also observe high correlation between low income and being female, which, for instance, can be due to the fact that the women represented in the specific dataset in analysis, tend to be underpaid.
Given these settings, in the data we would also observe high correlation between the variable gender being female and the variable representing loan denial, due to the fact that we do not account for the presence of the variable low income. Following common terminologies, we will say that such situations are due to \emph{spurious correlations}.

However, the picture is even more complicated: it could be the case, in fact, that being female is the actual cause of the low income and, hence, be the \emph{indirect cause} of loan denial \emph{through} low income. This would represent a causal relationship between the gender and the loan denial, that we would like to detect as discrimination.

Disentangling these two different cases, i.e., female is only correlated to low income in a spurious way, or being female is the actual cause of low income, is at the same time important and challenging. This highlights the need for a principled causal approach to discrimination detection.

Another typical pitfall of correlation-based reasoning is expressed by what is known as Simpson's paradox\footnote{\url{http://en.wikipedia.org/wiki/Simpson's_paradox}} according to which, correlations observed in different groups might disappear when these heterogeneous groups are aggregated, leading to \emph{false positives} cases of discrimination discovery. One of the most famous false-positive examples due to Simpson's paradox occurred when in 1973 the University of California, Berkeley was sued for discrimination against women who had applied for admission to graduate schools. In fact, by looking at the admissions of 1973, it first appeared that men applying were significantly more likely to be admitted than women. But later, by examining the individual departments carefully, it was discovered that none of them was significantly discriminating against women. On the contrary, most departments had exercised a small bias in favor of women. The apparent discrimination was due to the fact that women tended to apply to departments with lower rates of admission, while men tended to apply to departments with higher rates~\cite{Bickel1975}. Later in Section \ref{subsec:exp_comp} we will use the dataset from this episode to highlight the differences between correlation-based and causation-based methods.

Spurious correlations can also lead to \emph{false negatives} (i.e., discrimination existing but not being detected) as is commonly seen in \emph{``reverse-discrimination''}. The typical case is when authorities take affirmative actions, e.g.,  with compensatory quota systems, in order to protect a minority group from a potential discrimination. Such actions, while
trying to erase the supposed discrimination (i.e., the spurious correlation), fail to address the real underlying causes for discrimination, potentially ending up denying individual members of a privileged group from access to their rightful shares of social goods. In the early 70's, a case involving the University of California at Davis Medical School highlighted one such incident as the school's admissions program reserved 16 of the 100 places in its entering class for ``disadvantaged" applicants, thus unintentionally reducing the chances of admission for a qualified applicant.\footnote{\url{http://en.wikipedia.org/wiki/Regents_of_the_University_of_California_v._Bakke}}

These are just few typical examples of the pitfalls of correlation-based reasoning in the discovery of discrimination. Later in Section \ref{subsec:exp_comp} we show concrete examples from real-world datasets where correlation-based methods to discrimination discovery are not satisfactory.

\spara{Our proposal and contributions.} 
In this paper we take a principled causal approach to the data mining problem of discrimination detection in databases. Following Suppes' \emph{probabilistic causation theory}~\cite{probabilistic_causation,suppes_prima_facie} we define a method to extract, from a dataset of historical decision records, the causal structures existing among the attributes in the data.

In particular, we define the \emph{Suppes-Bayes Causal Network} (\SBCN), i.e., a directed acyclic graph (\DAG) where we have a node representing a Bernulli variable of the type $\langle attribute = value\rangle$ for each pair attribute-value present in the database. In this \DAG\ an arc $(A,B)$ represents the existence of a causal relation between $A$ and $B$ (i.e., $A$ causes $B$). Moreover, each arc is labeled with a score, representing the strength of the causal relation.

Our \SBCN\ is a constrained Bayesian network reconstructed by means of maximum likelihood estimation (MLE) from the given database, where we force the conditional probability distributions induced by the reconstructed graph to obey Suppes' constraints: i.e., \emph{temporal priority} and \emph{probability rising}.
Imposing Suppes' temporal priority and probability raising we obtain what we call the \emph{prima facie causes} graph~\cite{suppes_prima_facie},
which might still contain spurious causes (false positives). In order to remove these spurious case we add a bias term to the likelihood score, favoring sparser causal networks: in practice we sparsify the \emph{prima facie causes} graph by extracting a minimal set of edges which best explain the data. This regularization is done by means of the Bayesian Information Criterion (BIC)~\cite{bic_1978}.

The obtained \SBCN\ provides a clear summary, amenable to visualization, of the probabilistic causal structures found in the data. Such structures can be used to reason about different types of discrimination. In particular, we show how using several random-walk-based methods, where the next step in the walk is chosen proportionally to the edge weights, we can address different anti-discrimination legal concepts.
Our experiments show that the measures of discrimination produced by our methods are very strong, almost binary, signals: our measures are very clearly separating the discrimination and the non-discrimination cases.


\emph{To the best of our knowledge this is the first proposal of discrimination detection in databases grounded in probabilistic causal theory.}


\spara{Roadmap.} The rest of the paper is organized as follows.
Section \ref{sec:related_work} discusses the state of the art in discrimination detection in databases.
In Section \ref{sec:approach} we formally introduce the \SBCN\ and we present the method for extracting such causal network from the input dataset. Once extracted our  \SBCN, in Section \ref{sec:exploiting} we show how to exploit it for different concepts of discrimination detection, by means of random-walk methods.  Finally Section \ref{sec:experiments} presents our experimental assessment and comparison with correlation-based methods on two real-world datasets.


\section{Related work} \label{sec:related_work}
Discrimination analysis is a multi-disciplinary problem, involving sociological causes, legal reasoning, economic models, statistical techniques~\cite{Custers2012,RR2013}. Some authors~\cite{HajianFerrer12,Kamiran2012} study how to prevent data mining from becoming itself a source of discrimination. In this paper instead we focus on the data mining problem of detecting discrimination in a dataset of historical decision records, and in the rest of this section we present the most related literature.


Pedreschi et al.~\cite{peder2008,pederruggi2009,RPT2010} propose a technique based on extracting classification rules (inductive part) and ranking the rules according to some legally grounded measures of discrimination (deductive part). The result is a (possibly large) set of classification rules, providing local and overlapping niches of possible discrimination. This model only deals with group discrimination.

 Luong et al.~\cite{Ruggieri2011} exploit the idea of \emph{situation-testing} \cite{rorive2009proving} to detect individual discrimination. For each member of the protected group with a negative decision outcome, testers with similar characteristics ($k$-nearest neighbors) are considered. If there are significantly different decision outcomes between the testers of the protected group and the testers of the unprotected group, the negative decision can be ascribed to discrimination.

Zliobaite et al.~\cite{Calders2011} focus on  the concept of \emph{genuine requirement} to detect that part of discrimination which may be explained by other, legally grounded, attributes. In~\cite{Dwork2012} Dwork et al. address the problem of fair classification that achieves both group fairness, i.e., the proportion of members in a protected group receiving positive classification is identical to the proportion in the population as a whole, and individual fairness, i.e., similar individuals should be treated similarly.

The above approaches assume that the dataset under analysis contains attributes that denote protected groups (i.e., direct discrimination). This may not be the case when such attributes are not available, or not even collectable at a micro-data level as in the case of the loan applicant's race. In these cases we talk about indirect discrimination discovery.
Ruggieri et al.~\cite{pedreschiICAIL,atttack2014} adopt a form of rule inference to cope with the indirect discovery of discrimination. The correlation information is called background knowledge, and is itself coded as an association rule.

Mancuhan and Clifton~\cite{Mancuhan2014} propose Bayesian networks as a tool for discrimination discovery. Bayesian networks consider the dependence between all the attributes and use these dependencies in estimating the joint probability distribution without any strong assumption, since a Bayesian network graphically represents a factorization of the joint distribution in terms of conditional probabilities encoded in the edges.
Although Bayesian networks are often used to represent causal relationships, this needs not be the case, in fact a directed edge from two nodes of the network does not imply any causal relation between them. As an example, let us observe that the two graphs $A \rightarrow B \rightarrow C$ and $C \rightarrow B \rightarrow A$ impose exactly the same conditional independence requirements and, hence, any Bayesian network would not be able to disentangle the direction of any causal relationship among these events.


Our work departs from this literature as:
\begin{enumerate}
  \item it is grounded in probabilistic causal theory instead of being based on correlation;
  \item  it proposes a holistic approach able to deal with different types of discrimination in a single unifying framework, while the methods in the state of the art usually deal with one and only one specific type of discrimination;
  \item it is the first work to adopt graph theory and social network analysis concepts, such as random-walk-based centrality measures and community detection, for discrimination detection;
\end{enumerate}
 Our proposal has also lower computational cost than methods such as~\cite{peder2008,pederruggi2009,RPT2010} which require to compute a potentially exponential number of association/classification rules.

\section{Suppes-Bayes Causal Network} \label{sec:approach}

 In order to study discrimination as a causal inference problem, we exploit the criteria defined in the theories of \emph{probabilistic causation}~\cite{probabilistic_causation}. In particular, we follow~\cite{suppes_prima_facie}, where Suppes proposed the notion of \emph{prima facie causation} that is at the core of probabilistic causation.
Suppes' definition is based on two pillars: $(i)$ any cause must happen before its effect (\emph{temporal priority}) and $(ii)$ it must raise the probability of observing the effect (\emph{probability raising}).

\begin{definition}[Probabilistic causation~\cite{suppes_prima_facie}] \label{def:praising}
\emph{For any two events $h$ and $e$, occurring respectively at times $t_h$ and $t_e$, under the  mild assumptions that $0 < \Probab{h}, \Probab{e} < 1$, the event $h$ is called a prima facie cause of the event $e$ if it occurs before the effect and the cause raises the probability of the effect, i.e.} $t_h < t_e \quad \text{and} \quad \Pcond{e}{h} > \Pcond{e}{\neg h} \,.
$
\end{definition}

In the rest of this section we introduce our method to construct, from a given relational table $\mathcal{D}$,  a type of causal Bayesian network constrained to satisfy the conditions dictated by Suppes' theory, which we dub \emph{Suppes-Bayes Causal Network} (\SBCN).

In the literature many algorithms exist to carry out structural learning of general Bayesian networks and they usually fall into two families~\cite{koller2009probabilistic}. The first family, {\em constraint based learning}, explicitly tests for pairwise independence of variables conditioned on the power set of the rest of the variables in the network. These algorithms exploit structural conditions defined in various approaches to causality~\cite{probabilistic_causation,counterfactual_causation,manipulability_causation}. The second family, {\em score based learning}, constructs a network which maximizes the likelihood of the observed data with some regularization constraints to avoid overfitting.
Several hybrid approaches have also been recently proposed~\cite{brenner2013sparsityboost,caprese_causation,capri_causation}.

Our framework can be considered a hybrid approach exploiting \emph{constrained maximum likelihood estimation} (MLE) as follows: $(i)$ we first define all the possible causal relationship among the variables in $\mathcal{D}$ by considering only the oriented edges between events that are consistent with Suppes' notion of probabilistic causation and, subsequently, $(ii)$ we perform the reconstruction of the \SBCN\ by a score-based approach (using BIC), which considers only the valid edges.


%
We next present in details the whole learning process.

\subsection{Suppes' constraints}
\label{sec:sc}
We start with an input relational table $\mathcal{D}$ defined over a set $A$ of $h$ categorical attributes and $s$ samples. In case continuous numerical attributes exists in $\mathcal{D}$, we assume they have been discretized to become categorical. From $\mathcal{D}$, we derive $\mathcal{D^{\prime}}$, an $m \times s$ binary matrix representing
$m$ Bernoulli variables of the type $\langle attribute = value\rangle$, where an entry is $1$ if we have an observation for the specific variable and $0$ otherwise.

\spara{Temporal priority.}
The first constraint, temporal priority, cannot be simply checked in the data as we have no timing information for the events. In particular, in our context the events for which we want to reason about temporal priority are the Bernoulli variables $\langle attribute = value\rangle$.

The idea here is that, e.g.,  $income=low$ cannot be a cause of $gender=female$, because the time when the gender of an individual is determined is antecedent to that of when the income is determined. This intuition is implemented by simply letting the data analyst provide as input to our framework a partial temporal order $r: A \rightarrow \mathbb{N}$ for the $h$ attributes, which is then inherited from the $m$ Bernoulli variables \footnote{Note that our learning technique requires the input order $r$ to be correct and complete in order to guarantee its convergence. Nevertheless, if this is not the case, it is still capable of providing valuable insights about the underlying causal model, although with the possibility of false positive or false negative causal claims.}.

Based on the input dataset $\mathcal{D}$ and the partial order $r$ we produce the first graph $G=(V,E)$ where we have a node for each of the Bernoulli variables, so $|V| = m$, and we have an arc $(u,v) \in E$ whenever $r(u) \leq r(v)$. This way we will immediately rule out causal relations that do not satisfy the temporal priority constraint.

\spara{Probability raising.} Given the graph $G=(V,E)$ built as described above the next step requires to prune the arcs which do not satisfy the second constraint, probability raising, thus building $G'=(V,E')$, where $E' \subseteq E$. In particular we remove from $E$ each arc $(u,v)$ such that $\Pcond{v}{u} \leq \Pcond{v}{\neg u}$. The graph $G'$ so obtained is called  \emph{prima facie} graph.

We recall that the probability raising condition is equivalent to constraining for positive statistical dependence~\cite{caprese_causation}: in the prima facie graph we model \emph{all and only} the positive correlated relations among the nodes already partially ordered by temporal priority, consistently with Suppes' characterization of causality in terms of relevance.


\begin{figure*}[t!]
\centering
\vspace{-4mm}
\begin{tabular}{c}
\hspace{-10mm}\includegraphics[width=1.1\linewidth]{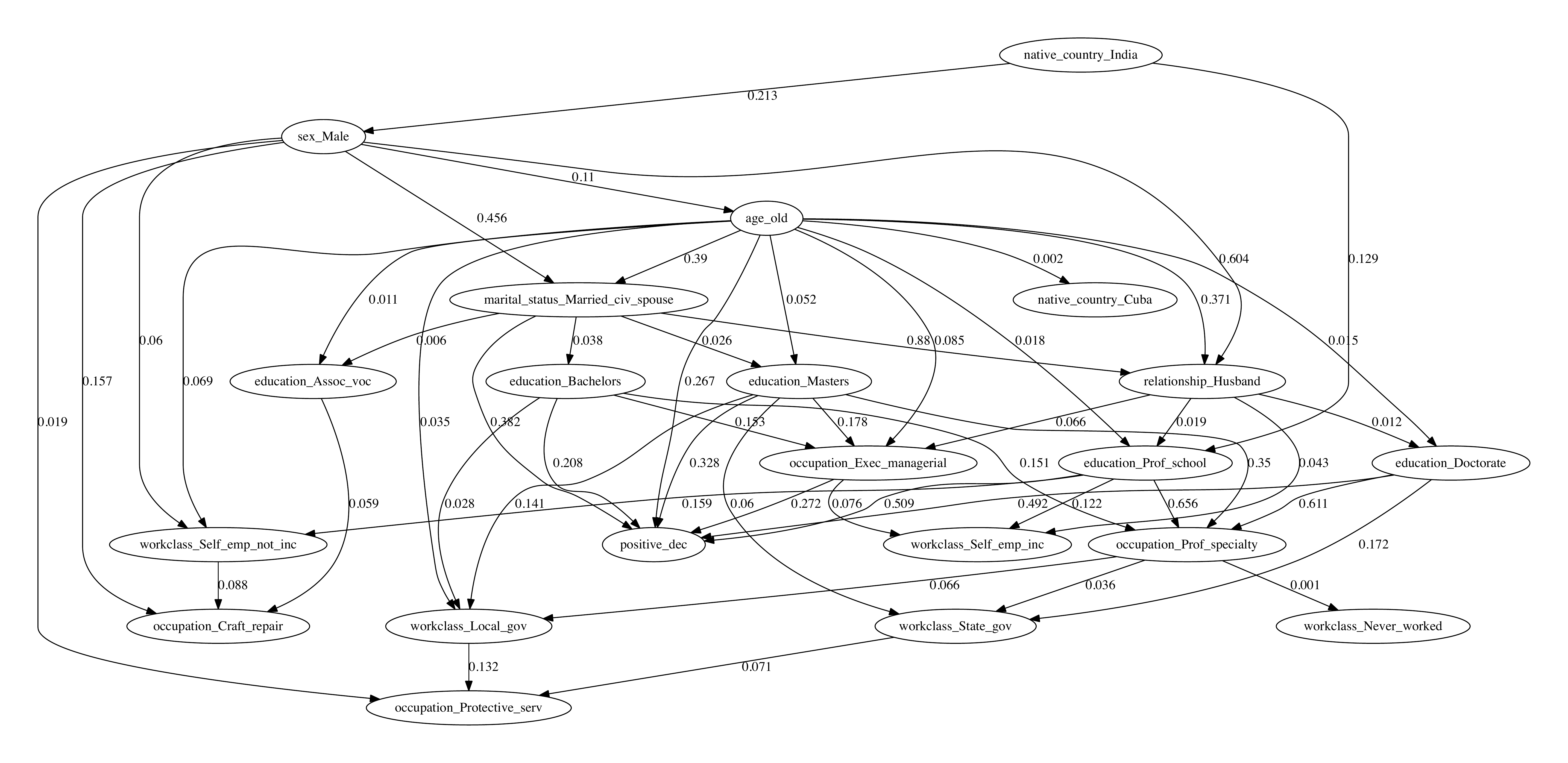}
\end{tabular}
\vspace{-6mm}
\caption{One portion of the \SBCN\ extracted from the \textsf{Adult} dataset. This subgraph corresponds to the $C_2$ community reported later in Table~\ref{tab:communities} (Section \ref{sec:experiments}) extracted by a community detection algorithm. \label{fig:example}}
\vspace{-2mm}
\end{figure*}

\pagebreak
\subsection{Network simplification}
Suppes' conditions are necessary but not sufficient to evaluate causation~\cite{capri_causation}: especially when the sample size is small, the model may have false positives (spurious causes), even after constraining for Suppes' temporal priority and probability raising criteria (which aim at removing false negatives). Consequently, although we expect all the statistically relevant causal relations to be modelled in $G'$, we also expect some spurious ones in it.

In our proposal, in place of other structural conditions used in various approaches to causality, (see e.g.,~\cite{probabilistic_causation,counterfactual_causation,manipulability_causation}), we perform a network simplification (i.e., we sparsify the network by removing arcs) with a score based approach, specifically by relying on the Bayesian Information Criterion (BIC) as the regularized likelihood score~\cite{bic_1978}.

We consider as inputs for this score the graph $G^{\prime}$ and the dataset $\mathcal{D^{\prime}}$. Given these, we select the set of arcs $E^* \subseteq E'$ that maximizes the score:
\begin{equation*}\label{eq:bic}
score_{\text{BIC}}(\mathcal{D^{\prime}},G^{\prime}) = LL(\mathcal{D^{\prime}} | G^{\prime}) - \dfrac{\log s}{2} \text{dim}(G^{\prime}).
\end{equation*}
In the equation, $G^{\prime}$ denotes the graph, $\mathcal{D^{\prime}}$ denotes the data, $s$ denotes the number of samples, and $\text{dim}(G^{\prime})$ denotes the number of parameters in $G^{\prime}$. Thus, the regularization term $-\text{dim}(G^{\prime})$ favors graphs with fewer arcs. The coefficient ${\log s}/{2}$ weighs the regularization term, such that the higher the weight, the more sparsity will be favored over ``explaining" the data through maximum likelihood. Note that the likelihood is implicitly weighted by the number of data points, since each point contributes to the score.

Assume that there is one \emph{true} (but unknown) probability distribution that generates the observed data, which is, eventually, uniformly randomly corrupted by false positives and negatives rates (in $[0,1)$). Let us call \emph{correct model}, the statistical model which best approximate this distribution. The use of BIC on $G'$ results in removing the false positives and, asymptotically (as the sample size increases), converges to the correct model. In particular, BIC is attempting to select the candidate model corresponding to the highest Bayesian Posterior probability, which can be proved to be equivalent to the presented score and its $log(s)$ penalization factor.

We denote with $G^* = (V,E^*)$ the graph that we obtain after this step. We note that, as for general Bayesian network, $G^*$ is a \DAG\ by construction.

\subsection{Confidence score}
Using the reconstructed \SBCN, we can represent the probabilistic relationships between any set of events (nodes). As an example, suppose to consider the nodes representing respectively $income=low$ and $gender=female$ being the only two direct causes (i.e., with arcs toward) of  $loan=denial$. Given \SBCN, we can estimate the conditional probabilities for each node in the graph, i.e., probability of $loan=denial$ given $income=low$ $AND$ $gender=female$ in the example, by computing the conditional probability of only the pair of nodes directly connected by an arc. For an overview of state-of-the-art methods for doing this, see~\cite{koller2009probabilistic}. However, we expect to be mostly dealing with full data, i.e., for every directly connected node in the \SBCN, we expect to have several observations of any possible combination $attribute=value$. For this reason, we can simply estimate the node probabilities by counting the observations in the data. Moreover, we will exploit such conditional probabilities to define the confidence score of each arc in terms of their causal relationship.

In particular, for each arc $(v,u) \in E^*$ involving the causal relationship between two nodes $u,v \in V$, we define a confidence score $W(v,u) = \Pcond{u}{v} - \Pcond{u}{\neg v}$, which, intuitively, aims at estimating the observations where the cause $v$ is followed by its effect $u$, that is $\Pcond{u}{v}$, and the ones where this is not observed, i.e., $\Pcond{u}{\neg v}$, because of imperfect causal regularities. We also note that, by the constraints discussed above, we require $\Pcond{u}{v} \gg \Pcond{u}{\neg v}$ and, for this reason, each weight is positive and no larger than 1, i.e., $W: E^* \rightarrow (0,1]$.

Combining all of the concepts discussed above, we conclude with the following definition.


\begin{definition}[Suppes-Bayes Causal Network] \label{def:scn}\enlargethispage*{\baselineskip}
\emph{Given an input dataset $\mathcal{D^{\prime}}$ of $m$ Bernoulli variables and $s$ samples, and given a partial order $r$ of the variables, the Suppes-Bayes Causal Network $SBCN = (V,E^*,W)$ subsumed by $\mathcal{D^{\prime}}$ is a weighted \DAG\ such that the following requirements hold:}
\squishlist
\item \emph{\textbf{[Suppes' constraints]} for each arc $(v,u) \in E^*$ involving the causal relationship between nodes $u,v \in V$, under the  mild assumptions that $0 < \Probab{u}, \Probab{v} < 1$}:
$$
r(v) \leq r(u) \quad \text{and} \quad \Pcond{u}{v} > \Pcond{u}{\neg v} \,.
$$
\item \emph{\textbf{[Simplification]} let $E'$ be the set of  arcs satisfying the Suppes' constraints as before; among all the subsets of $E'$, the set of arcs $E^*$ is the one whose corresponding graph maximizes BIC:}
$$
E^* = \argmax_{E \subseteq E', G =(V,E)}  (LL(\mathcal{D^{\prime}} | G) - \dfrac{\log s}{2} \text{dim}(G)) \,.
$$

\item \emph{\textbf{[Score]}} $W(v,u) = \Pcond{u}{v} - \Pcond{u}{\neg v},\; \forall (v,u) \in E^*$
\squishend
\end{definition}

An example of a portion of a \SBCN\ extracted from a real-world dataset is reported in Figure \ref{fig:example}.

Algorithm \ref{algo:sbcn} summarizes the learning approach adopted for the inference of the \SBCN\ .
Given $\mathcal{D^{\prime}}$ an input dataset over $m$ Bernoulli variables and $s$ samples, and $r$ a partial order of the variables, Suppes' constraints are verified (Lines $4$-$9$) to construct a \DAG\ as described in Section~\ref{sec:sc}.

The likelihood fit is performed by hill climbing (Lines $12$-$21$), an iterative optimization technique that starts with an arbitrary solution to a problem (in our case an empty graph) and then attempts to find a better solution by incrementally visiting the neighbourhood of the current one. If the new candidate solution is better than the previous one it is considered in place of it. The procedure is repeated until the stopping criterion is matched.

The $!StoppingCriterion$ occurs (Line $12$) in two situations: $(i)$ the procedure stops when we have performed a large enough number of iterations or, $(ii)$ it stops when none of the solutions in $G_{neighbors}$ is better than the current $G_{fit}$.
Note that $G_{neighbors}$ denotes all the solutions that are derivable from $G_{fit}$ by  removing or adding at most one edge.

\begin{algorithm}
\caption{{Learning the \SBCN\ }}
\label{algo:sbcn}
\begin{algorithmic}[1]
\STATE{Inputs: $\mathcal{D^{\prime}}$ an input dataset of $m$ Bernoulli variables and $s$ samples, and $r$ a partial order of the variables}
\STATE{Output: $SBCN(V,E^*,W)$ as in Definition 2}
\STATE{\textbf{[Suppes' constraints]}}
\FORALL{pairs $(v,u)$ among the $m$ Bernoulli variables}
\IF{$r(v) \leq r(u)$ \AND $\Pcond{u}{v} > \Pcond{u}{\neg v}$}
\STATE{add the arc $(v,u)$ to $SBCN$.}
\ENDIF
\ENDFOR
\STATE{\textbf{[Likelihood fit by hill climbing]}}
\STATE{Consider $G(V,E^*,W)_{fit} = \emptyset$.}
\WHILE{$!StoppingCriterion()$}
\STATE{Let $G(V,E^*,W)_{neighbors}$ be the neighbor solutions of $G(V,E^*,W)_{fit}$.}
\STATE{Remove from $G(V,E^*,W)_{neighbors}$ any solution whose arcs are not included in $SBCN$.}
\STATE{Consider a random solution $G_{current}$ in $G(V,E^*,W)_{neighbors}$.}
\IF{$score_{BIC}(\mathcal{D^{\prime}},G_{current})>score_{BIC}(\mathcal{D^{\prime}},G_{fit})$}
\STATE{$G_{fit} = G_{current}$.}
\STATE{$\forall$ arc $(v,u)$ of $G_{fit}$,  $W(v,u) = \Pcond{u}{v} - \Pcond{u}{\neg v}$.}
\ENDIF
\ENDWHILE
\STATE{$SBCN = G_{fit}$.}
\RETURN $SBCN$.
\end{algorithmic}
\end{algorithm}


\spara{Time and space complexity.}
The computation of the valid \DAG\ according to Suppes' constraints (Lines $4$-$10$) requires a pairwise calculation among the $m$ Bernoulli variables. After that, the likelihood fit by hill climbing (Lines $11$-$21$) is performed. Being an heuristic, the computational cost of hill climbing depends on the sopping criterion. However, constraining by Suppes' criteria tends to regularize the problem leading on average to a quick convergence to a good solution. The time complexity of Algorithm~\ref{algo:sbcn} is $\mathcal{O}(s m)$ and the space required is $\mathcal{O}(m^2)$, where $m$ however is usually not too large, being the number of attribute-value pairs, and not the number of examples.

\pagebreak
\subsection{Expressivity of a SBCN} 
We conclude this Section with a discussion on the causal relations that we model by a $SBCN$.

Let us assume that there is one true (but unknown) probability distribution that generates the observed data whose structure can be modelled by a \DAG . Furthermore, let us consider the causal structure of such a \DAG\ and let us also assume each node with more then one cause to have conjunctive parents: any observation of the child node is preceded by the occurrence of all its parents. As before we call correct model, the statistical model which best approximate the distribution. On these settings, we can prove the following theorem.


\newtheorem{theorem}{Theorem}
\begin{theorem}
\emph{Let the sample size $s \to \infty$, the provided partial temporal order $r$ be correct and complete and the data be uniformly randomly corrupted by false positives and negatives rates (in $[0,1)$), then the $SBCN$ inferred from the data is the correct model.}
\end{theorem}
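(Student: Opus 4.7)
The plan is to decouple the argument into two stages: first, that the true causal DAG $G_{\mathrm{true}}$ survives the Suppes' filter, and second, that BIC consistency singles it out inside the prima facie graph $G'$.

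Stage 1 (no true edge is dropped). Since $r$ is assumed correct and complete, every true parent $v$ of a node $u$ satisfies $r(v)\le r(u)$, so temporal priority is automatic. For the probability-raising test, the conjunctive-parents assumption together with $\Probab{\cdot}\in(0,1)$ (the mild assumption of Definition~1) gives, in the uncorrupted distribution, $\Pcond{u^{*}}{v^{*}} > \Pcond{u^{*}}{\neg v^{*}}$, equivalently $\mathrm{Cov}(u^{*},v^{*})>0$. Modelling ``uniform random corruption'' as independent per-cell flips with false-positive rate $\alpha\in[0,1)$ and false-negative rate $\beta\in[0,1)$, and using independence of the noise across variables and samples, one obtains the standard identity
\begin{equation*}
\mathrm{Cov}(u_{\mathrm{obs}},v_{\mathrm{obs}}) \;=\; (1-\alpha_{u}-\beta_{u})(1-\alpha_{v}-\beta_{v})\,\mathrm{Cov}(u^{*},v^{*}),
\end{equation*}
so the sign of the covariance, and hence the direction of probability raising, is preserved whenever the channel is non-degenerate. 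As $s\to\infty$ the empirical conditional probabilities converge to their population counterparts by the strong law of large numbers, so every true causal edge is retained in the prima facie graph $G'$; in particular $G_{\mathrm{true}}\subseteq G'$.

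Stage 2 (BIC isolates $G_{\mathrm{true}}$). Because the noise channel is independent across nodes, all conditional-independence statements encoded by $G_{\mathrm{true}}$ carry over to the observed distribution, so $G_{\mathrm{true}}$ is a minimal I-map of the observed law. I would then invoke the standard BIC consistency result for discrete Bayesian networks: for any strict supergraph $G''\supsetneq G_{\mathrm{true}}$ contained in $G'$, the additional arcs cannot improve the asymptotic log-likelihood (they encode no new dependence), yet they strictly increase $\text{dim}(G'')$, so the $\tfrac{\log s}{2}\text{dim}(\cdot)$ penalty eventually wins as $s\to\infty$; conversely, any strict subgraph of $G_{\mathrm{true}}$ omits a genuine dependence and loses $\Theta(s)$ in log-likelihood. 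Consequently the argmax in Definition~2 is almost surely $G_{\mathrm{true}}$, which is the correct model; the confidence scores $W$ are continuous functions of the empirical conditional probabilities and thus also converge to their population values.

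The main obstacle lies in the probability-raising step of Stage 1. For a single parent the covariance identity above is clean, but under conjunctive parents $v$ raises the probability of $u$ only after marginalizing over the remaining parents, and one has to verify that this averaged inequality is not reversed by the simultaneous corruption of all the parents; the hypothesis $\Probab{\cdot}\in(0,1)$, combined with independence of the noise across coordinates, is precisely what rules out destructive interference between the flipping patterns. A secondary but more standard difficulty is the faithfulness assumption hidden in Stage 2: BIC consistency returns $G_{\mathrm{true}}$ rather than a Markov-equivalent competitor only if no measure-zero cancellation between directed paths masks a dependence; under independent random corruption with parameters drawn from a continuous range, such cancellations occur on a null set and faithfulness holds generically, which is sufficient for the almost-sure conclusion of the theorem.
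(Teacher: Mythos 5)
Your two-stage skeleton (true edges survive Suppes' constraints; BIC then prunes the spurious ones) matches the paper's sketch, and your covariance-attenuation identity in Stage~1 is in fact a sharper version of the paper's one-line claim that uniform noise shifts marginal and joint probabilities monotonically, so the probability-raising inequalities survive. The genuine problems are in Stage~2. First, the mechanism you invoke to obtain $G_{\mathrm{true}}$ itself, rather than a Markov-equivalent competitor, is the wrong one: BIC is score-equivalent for discrete Bayesian networks, i.e., two Markov-equivalent DAGs differing only in edge orientation have identical maximized likelihood and identical dimension, so no faithfulness or genericity argument can make the score prefer one orientation over the other. The paper breaks the tie differently and essentially: the maximization in Definition~2 ranges only over subsets of $E'$, and every arc of the prima facie graph already respects the correct and complete temporal order $r$, so the reoriented competitors are simply absent from the search space. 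You do restrict attention to subgraphs of $G'$, but you never use that restriction where it is needed, and the ``faithfulness holds generically'' claim is not implied by the hypotheses anyway (under conjunctive parents the noise-free law even contains zero-probability events, e.g.\ $P(u \mid \neg v)=0$, so genericity reasoning does not apply; the conjunctive-parents assumption is what does the work in the paper's argument).

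Second, your key assertion for the corrupted case --- that independent per-node noise preserves all conditional-independence statements of $G_{\mathrm{true}}$, so that $G_{\mathrm{true}}$ remains a minimal I-map of the observed law --- is false. Noise on a conditioning variable destroys screening-off: with $x \to y \to z$ and $y$ observed through a noisy channel, $x_{\mathrm{obs}}$ and $z_{\mathrm{obs}}$ are in general dependent given $y_{\mathrm{obs}}$, so the observed distribution is not Markov with respect to $G_{\mathrm{true}}$, and the BIC-consistency step as you state it would asymptotically favor adding the arc $x \to z$, which is available in $G'$ since temporal priority and probability raising both hold for it. The paper's sketch confines the role of noise to the monotone preservation of the Suppes inequalities and does not rest the BIC step on preservation of conditional independencies, so your Stage~2 needs a different justification under corruption rather than the I-map claim. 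A smaller omission: your supergraph/subgraph dichotomy does not cover candidate graphs in $G'$ that are neither subgraphs nor supergraphs of $G_{\mathrm{true}}$ (one true arc dropped, one spurious arc added); these have to be excluded by combining the two halves of the comparison.
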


\begin{proof} \emph{[Sketch]}
Let us first consider the case where the observed data have no noise. On such an input, we observe that the prima facie graph has no false negatives: in fact $\forall [c \to e]$ modelling a genuine causal relation, $\Pconj{e}{c} = \Probab{e}$, thus the probability raising constraint is satisfied, so it is the temporal priority given that we assumed  $r$ to be correct and complete.

Furthermore, it is know that the likelihood fit performed by $BIC$ converges to a class of structures equivalent in terms of likelihood among which there is the correct model: all these topologies are the same unless the directionality of some edges. But, since we started with the prima facie graph which is already ordered by temporal priority, we can conclude that in this case the $SBCN$ coincides with the correct model.

To extend the proof to the case of data uniformly randomly corrupted by false positives and negatives rates (in $[0,1)$), we note that the marginal and joint probabilities change monotonically as a consequence of the assumption that the noise is uniform. Thus, all inequalities used in the preceding proof still hold, which concludes the proof.
\end{proof}



In the more general case of causal topologies where any cause of a common effect is independent from any other cause (i.e., we relax the assumption of conjunctive parents), the $SBCN$ is not guaranteed to converge to the correct model but it coincides with a subset of it modeling all the edges representing statistically relevant causal relations (i.e., where the probability raising condition is verified).

\section{Discrimination discovery by random walks}\label{sec:exploiting}
In this section we propose several random-walk-based methods over the reconstructed \SBCN, to deal with different discrimination-detection tasks.


\subsection{Group discrimination and favoritism}\label{subsec:groupdiscr}
The basic problem in the analysis of direct discrimination is precisely to quantify the degree
of discrimination suffered by a given protected group (e.g., an ethnic group) with respect to a decision (e.g., loan denial). In contrast to discrimination, favoritism refers to the case of an individual treated better than others for reasons
not related to individual merit or business necessity: for instance,
favoritism in the workplace might result in a person being promoted faster than others
unfairly. In the following we denote favoritism as positive discrimination in contrast with negative discrimination.

Given an \SBCN\ we define a measure of  group discrimination (either negative or positive) for each node $v \in V$.
Recall that each node represents a pair $\langle attribute = value\rangle$, so it is essentially what we refer to as a group, e.g., $\langle gender = female\rangle$.
Our task is to assign a score of discrimination $ds^{-}: V \rightarrow [0, 1]$ to each node, so that the closer $ds^-(v)$ is to 1 the more discriminated is the group represented by $v$.

We compute this score by means of a number $n$ of random walks that start from $v$ and reaches either the node representing the positive decision or the one representing the negative decision.
In these random walks the next step is chosen proportionally to the weights of the out-going arcs.
Suppose a random walk has reached a node $u$, and let $deg_{out}(u)$ denote the set of outgoing arcs from $u$.
Then the arc $(u,z)$ is chosen with probability $$p(u,z) = \frac{W(u,z)}{\sum_{e \in deg_{out}(u)} W(e)}.$$
When a random walk ends in a node with no outgoing arc before reaching either the negative or the positive decision, it is restarted from the source node $v$.


\begin{definition}[Group discrimination score]\label{def:gds}
\emph{Given an $SBCN=(V,E^*,W)$, let $\delta^- \in V$ and $\delta^+ \in V$ denote the nodes indicating the negative and positive decision, respectively.
Given a node $v \in V$, and a number $n \in \mathbb{N}$ of random walks to be performed, we denote as $rw_{v \rightarrow \delta^-}$ the number of random walks started at node $v$ that reach  $\delta^-$ earlier than $\delta^+$. The discrimination score for the group corresponding to node $v$ is then defined as
$$
ds^-(v)= \frac{rw_{v \rightarrow \delta^-}}{n}.
$$
This implicitly also defines a score of positive discrimination (or favoritism): $ds^+(v) = 1 - ds^-(v)$.}
\end{definition}


Taking advantage of  the \SBCN\, we also propose two additional measures capturing how far a node representing a group is from the positive and negative decision respectively.
This is done by computing the average number of steps that the random walks take to reach the two decisions: we denote these scores  as $as^{-}(v)$ and $as^{+}(v)$.

\subsection{Indirect discrimination}
The European Union Legislation \cite{Europeandisc} provides a broad definition of indirect discrimination as occurring ``where an apparently neutral provision, criterion or practice would put persons of a racial or ethnic origin at a particular disadvantage compared with other persons". In other words, the actual result of the apparently neutral provision is the same as an explicitly discriminatory one. A typical legal case study of indirect discrimination is concerned with \emph{redlining}: e.g., denying a loan because of ZIP code, which in some areas is an attribute highly correlated to race. Therefore, even if the attribute race cannot be required at loan-application time (thus would not be present in the data), still race discrimination is perpetrated.
Indirect discrimination discovery refers to the data mining task of discovering the attributes values that can act as a proxy to the protected groups and lead to discriminatory decisions indirectly \cite{peder2008,pederruggi2009,HajianFerrer12}.

In our setting, indirect discrimination can be detected by applying the same method described in Section~\ref{subsec:groupdiscr}.

\subsection{Genuine requirement}
 The legal concept of genuine requirement refers to detecting that part of the discrimination which may be explained by other, legally-grounded, attributes; e.g., denying credit to women may be explainable by the fact that most of them have low salary or delay in returning previous credits. A typical example in the literature is the one of the ``genuine occupational requirement", also called ``business necessity" in \cite{USdisc,ellis2012eu}. In the state of the art of data mining methods for discrimination discovery, it is also known as \emph{explainable discrimination} \cite{hajian2014discrimination} and \emph{conditional discrimination} \cite{Calders2011}.

The task here is to evaluate to which extent the discrimination apparent for a group is ``explainable'' on a legal ground.
Let $v \in V$ be the node representing the group which is suspected of being discriminated, and $u_l \in V$ be a node whose causal relation with a negative or positive decision is legally grounded. As before, $\delta^-$ and $\delta^+$ denote the negative and positive decision, respectively.
Following the same random-walk process described in Section~\ref{subsec:groupdiscr}, we define the \emph{fraction of explainable discrimination} for the group $v$:
$$
fed^{-}(v)= \frac{rw_{v \rightarrow u_l \rightarrow  \delta^-}}{rw_{v \rightarrow \delta^-}},
$$
i.e.,  the fraction of random walks passing trough  $u_l $ among the ones started in $v$ and reaching  $\delta^-$ earlier than $\delta^+$. Similarly we define $fed^{+}(v)$, i.e., the fraction of explainable positive discrimination.

\pagebreak
\subsection{Individual and subgroup discrimination} \label{sec:MID} 
Individual discrimination requires to measure the amount of discrimination for a specific individual, i.e., an entire record in the database. Similarly, subgroup discrimination refers to discrimination against a subgroup described by a combination of multiple protected and non-protected attributes: personal data, demographics, social, economic and cultural indicators, etc. For example, consider the case of gender discrimination in credit approval: although an analyst may observe that no discrimination occurs in general, it may turn out that older women obtain car loans only rarely.

Both problems can be handled by generalizing the technique introduced in Section~\ref{subsec:groupdiscr} to deal with a set of starting nodes, instead of only one.
Given an $SBCN=(V,E^*,W)$ let $v_1,\ldots,v_n$ be the nodes of interest. In order to define a discrimination score for $v_1,\ldots,v_n$, we perform a \emph{personalized PageRank}\cite{PPR1} computation with respect to $v_1,\ldots,v_n$.
In personalized PageRank, the probability of jumping to a node when abandoning the random walk is not uniform, but it is given by a vector of probabilities for each node.
In our case the vector will have the value $\frac{1}{n}$ for each of the nodes $v_1,...,v_n \in V$ and zero for all the others.
The output of personalized PageRank is a score $ppr(u |v_1,...,v_n)$ of proximity/relevance to  $\{v_1,...,v_n \}$ for each other node $u$ in the network. In particular, we are interested in the score of the nodes representing the negative and positive decision: i.e., $ppr(\delta^- |v_1,...,v_n)$ and $ppr(\delta^+ |v_1,...,v_n)$ respectively.



\begin{definition}[Generalized discrimination score]\label{def:gends}
\emph{Given an $SBCN=(V,E^*,W)$, let $\delta^- \in V$ and $\delta^+ \in V$ denote the nodes indicating the negative and positive decision, respectively.
Given a set of nodes $v_1,...,v_n \in V$, we define the generalized (negative) discrimination score for the subgroup or the individual represented by $\{v_1,...,v_n \}$ as
$$
gds^-(v_1,...,v_n)= \frac{ppr(\delta^- |v_1,...,v_n)}{ppr(\delta^- |v_1,...,v_n) + ppr(\delta^+ |v_1,...,v_n)}.
$$
This implicitly also defines a generalized score of positive discrimination: $gds^+(v_1,...,v_n) = 1 - gds^-(v_1,...,v_n)$.}
\end{definition}

\section{Experimental Evaluation} \label{sec:experiments}

This section reports the experimental evaluation of our approach on four datasets, \emph{Adult}, \emph{German credit} and \emph{census-income} from the UCI Repository of Machine Learning Databases\footnote{\url{http://archive.ics.uci.edu/ml}}, and \emph{Berkeley Admissions Data} from \cite{Feedman1998}. These are well-known real-life datasets typically used in discrimination-detection literature.

\medskip

 \textsf{Adult:}  consists of 48,842 tuples and 10 attributes, where each tuple correspond to an individual and it is described by personal attributes such as age, race, sex, relationship, education, employment, etc. Following the literature, in order to define the decision attribute we use the income levels, $\leq$50K (negative decision) or $>$50K (positive decision). We use four levels in the partial order for temporal priority: age, race, sex, and native country are defined in the first level; education, marital status, and relationship are defined in the second level; occupation and work class are defined in the third class, and the decision attribute (derived from income) is  the last level.

 \medskip

 \textsf{German credit:} consists of 1000 tuples with 21 attributes on bank account holders applying for credit. The decision attribute is based on repayment history, i.e., whether the customer is labeled with good or bad credit risk. Also for this dataset the partial order for temporal priority has four orders. Personal attributes such as gender, age, foreign worker are defined in the first level. Personal attributes such as employment status and job status are defined in the second level.  Personal properties such as savings status and credit history are defined in the third level, and finally the decision attribute is the last level.

\medskip

 \textsf{Census-income:} consists of 299,285 tuples and 40 attributes, where each tuple correspond to an individual and it is described by demographic and employment attributes such as age, sex, relationship, education, employment, ext. Similar to  \textsf{Adult} dataset, the decision attribute is the income levels and we define four levels in the partial order for temporal priority.


\medskip

Building the \SBCN\ just take a handful of seconds in \textsf{German credit} and \textsf{Adult}, and few minutes in \textsf{Census-income} on a commodity laptop.
The main characteristics of the extracted  \SBCN\ are reported in Table~\ref{tab:characteristics}, while the distribution of the edges scores $W(e)$ is plotted in Figure \ref{fig:edgeprobsNew}.
\begin{table}[t!]
\small
\caption{\SBCN\ main characteristics.\label{tab:characteristics}}
\centering \vspace{1mm}
\begin{tabular}{c c c c c c}
{\em Dataset} & $|V|$  & $|A|$ &   {\em avgDeg} & {\em maxInDeg} & {\em maxOutDeg}
\\
\hline
\textsf{Adult} & 92  & 230  &  2.5 &   7 &  19  \\
\textsf{German credit} & 73  &  102  & 1.39  &  3  & 7  \\
\textsf{Census-income} &  386  & 1426 & 3.69  &  8  & 54  \\
\hline
\end{tabular}
\end{table}
\begin{figure}[t!]
\begin{tabular}{c}
\hspace{-12mm}\includegraphics[width=1.2\linewidth]{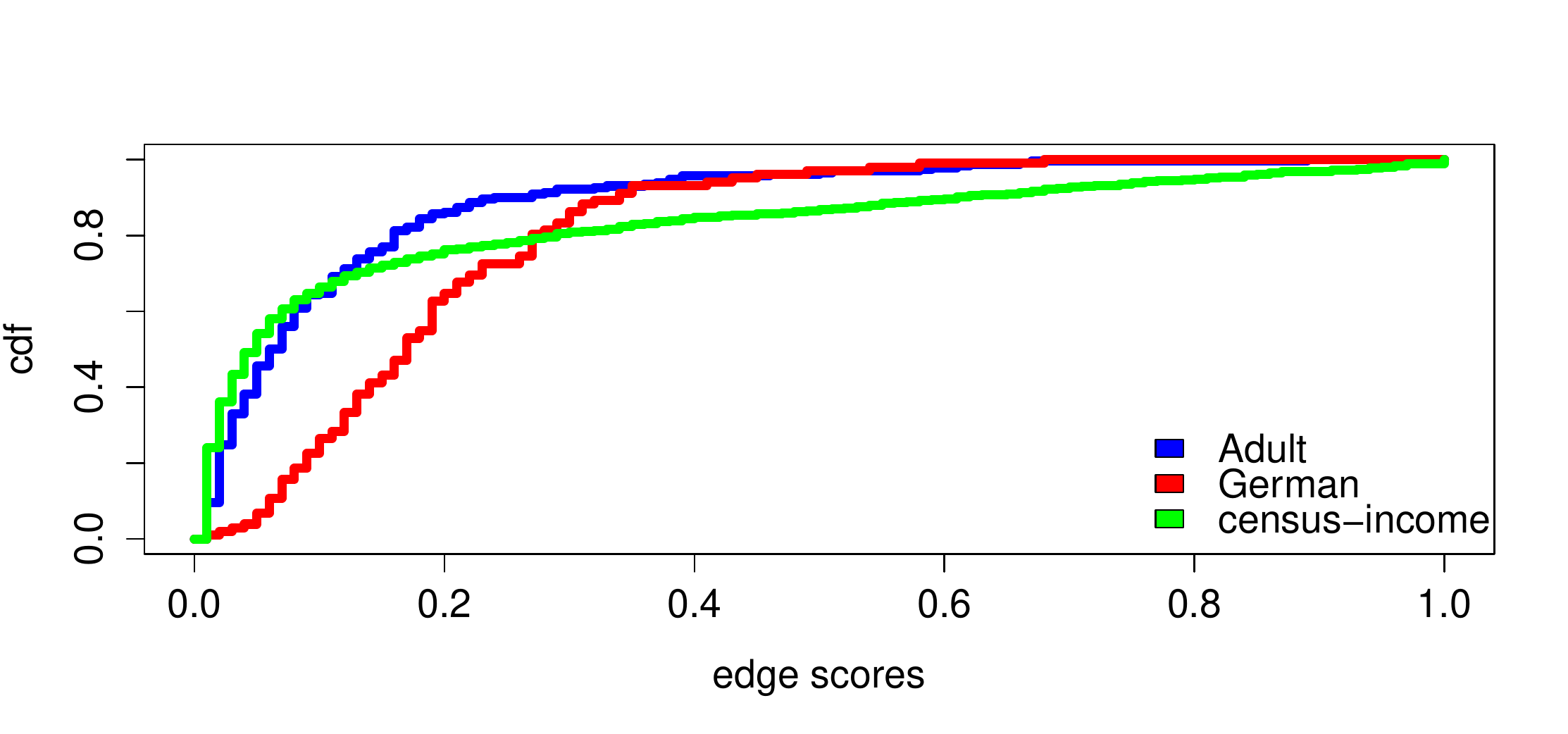}
\end{tabular}
\caption{Distribution of edge scores. \label{fig:edgeprobsNew}}
\end{figure}

\smallskip

As discussed in the Introduction we also use the dataset from the famous 1973 episode at University of California at Berkeley, in order to highlight the differences between correlation-based and causation-based methods.

\medskip

\textsf{Berkeley Admissions Data:} consists of 4,486 tuples and three attributes, where each tuple correspond to an individual and it is described by the gender of applicants and the department that they apply for it. For this dataset the partial order for temporal priority has three orders. Gender is defined in the first level, department in the second level, and finally the decision attribute is the last level.
Table~\ref{tab:BAD} is a three-way table that presents admissions data at the University of California, Berkeley in 1973 according to the variables department (A, B, C, D, E), gender (male, female), and outcome (admitted, denied). The table is adapted from data in the text by Freedman, et al. \cite{Feedman1998}.

\begin{table}[t!]
\centering
\begin{tabular}{llrllr}
\hline
\multicolumn{2}{c}{Male} & \multicolumn{2}{c}{Female}  \\
\cline{1-4}
Admitted   & Denied	 & Admitted & Denied & Department \\
\hline
512      & 313   &  89 &19  & A  \\
313      & 207   & 17   &    8  & B   \\
120     & 205    & 202  & 391 & C    \\
138     & 279    & 131 & 244  &  D   \\
53 &  138    & 94 &  299 &  E   \\
22 & 351 & 24 & 317 & F  \\
\hline
\end{tabular}
\caption{Berkeley Admission Data}
\label{tab:BAD}
\end{table}

\subsection{Community detection on the \SBCN}
Given that our \SBCN\ is a directed graph with edge weight, as a first characterization we try to partition it using a random-walks-based community detection algorithm, called \emph{Walktrap} and proposed in \cite{pons2005computing}, whose unique parameter is the maximum number of steps in a random walk (we set it to 8), and which automatically identifies the right number of communities.  The idea is that short random walks tend to stay in the same community (densely connected area of the graph). Using this algorithm over the reconstructed \SBCN\ from \textsf{Adult} dataset, we obtain $5$ communities: two larger ones and three smaller ones (reported in Table~\ref{tab:communities}).
Interestingly, the two larger communities seem built around the negative ($C_1$)  and the positive ($C_2$) decisions.

Figure~\ref{fig:example} in Section \ref{sec:approach} shows the subgraph of the \SBCN\ corresponding to $C_2$ (that we can call, the favoritism cluster): we note that such cluster also contains nodes such as \textsf{sex\_Male}, \textsf{age\_old}, \textsf{relationship\_Husband}. The other large community $C_1$, can be considered the discrimination cluster: beside the negative decision it contains other nodes representing disadvantaged groups such as \textsf{sex\_Female}, \textsf{age\_young}, \textsf{race\_Black}, \textsf{marital\_status\_Never\_married}.
This good separability of the \SBCN\ in the two main clusters of discrimination and favoritism, highlights the goodness of the causal structure captured by the \SBCN.
\begin{table}[t!]
\caption{Communities found in the \SBCN\ extracted from the \textsf{Adult} dataset by  \emph{Walktrap}\cite{pons2005computing}. In the table the attributes are shortened as in parenthesis: age (ag), education (ed), marital\_status (ms), native\_country (nc), occupation (oc), race(ra), relationship (re), sex (sx), workclass (wc).}
\vspace{2mm}
\label{tab:communities}

\centering
\begin{scriptsize}
\begin{tabular}{c}
\hline
$C_1$ \\
\hline
\textbf{negative\_dec},  wc:Private, ed:Some\_college, ed:Assoc\_acdm,\\
 \textbf{ms:Never\_married}, ms:Divorced, \textbf{ms:Widowed},\\
  ms:Married\_AF\_spouse, oc:Sales, oc:Other\_service,\\
 oc:Priv\_house\_serv,  re:Own\_child, re:Not\_in\_family, re:Wife,\\
  \textbf{re:Unmarried}, re:Other\_relative, \textbf{ra:Black}, oc:Armed\_Forces,\\
   oc:Handlers\_cleaners,  oc:Tech\_support, oc:Transport\_moving,\\
ed:7th\_8th, ed:10th, ed:12th, ms:Separated,\\
 ed:HS\_grad,ed:11th, nc:Outlying\_US\_Guam\_USVI\_etc,\\
   nc:Haiti, \textbf{ag:young},\textbf{ sx:Female},  ra:Amer\_Indian\_Eskimo,\\
    nc:Trinadad\_Tobago, nc:Jamaica, oc:Machine\_op\_inspct,\\
   ms:Married\_spouse\_absent, oc:Adm\_clerical, \\
\hline
$C_2$\\
\hline
\textbf{positive\_dec}, oc:Prof\_specialty, wc:Self\_emp\_not\_inc,\\
 ms:Married\_civ\_spouse, oc:Craft\_repair,oc:Protective\_serv,\\
 \textbf{re:Husband}, ed:Prof\_school, wc:Self\_emp\_inc,\\
 \textbf{ag:old} , wc:Local\_gov, \textbf{oc:Exec\_managerial},\\
  ed:Bachelors, ed:Assoc\_voc, ed:Masters, wc:Never\_worked,\\
 wc:State\_gov, ed:Doctorate, \textbf{sx:Male}, nc:India, nc:Cuba\\
\hline
$C_3$\\
 \hline
 oc:Farming\_fishing, wc:Without\_pay, nc:Mexico, nc:Canada,\\
nc:Italy, nc:Guatemala, nc:El\_Salvador, ra:White, \\
 nc:Poland, ed:1st\_4th, ed:9th,ed:Preschool, ed:5th\_6th\\
   \hline
$C_4$\\
   \hline
 nc:Iran, nc:Puerto\_Rico, nc:Dominican\_Republic,\\
nc:Columbia, nc:Peru, nc:Nicaragua,  ra:Other \\
\hline
$C_5$ \\
   \hline
  nc:Philippines, nc:Cambodia, nc:China, nc:South, \\
   nc:Japan, nc:Taiwan, nc:Hong, nc:Laos, nc:Thailand,\\
    nc:Vietnam, ra:Asian\_Pac\_Islander\\
\hline
\end{tabular}
\end{scriptsize}
\vspace{4mm}
\end{table}

\subsection{Group discrimination and favoritism}
We next focus on assessing the discrimination score $ds^{-}$  we defined in Section \ref{subsec:groupdiscr}, as well as the average number of steps that the random walks take to reach the negative and positive decisions, denote $as^{-}(v)$  and $as^{+}(v)$ respectively. 

Tables~\ref{tab:adultdscores}, \ref{tab:germanscores} and \ref{tab:incomescores} report the top-5 and bottom-5 nodes w.r.t.  the discrimination score $ds^{-}$, for datasets \textsf{Adult}, \textsf{German} and \textsf{Census-income}, respectively. The first and most important observation is that our discrimination score provides a very clear signal, with some  disadvantaged groups having very high discrimination score (equal to 1 or very close), and similarly clear signals of favoritism, with groups having $ds^-(v) = 0$, or equivalently $ds^+(v) = 1$. This is more clear in the \textsf{Adult} dataset, where the positive and negative decisions are artificially derived from the income attribute. In the \textsf{German credit} dataset, which is more realistic as the decision attribute is truly about credit, both discrimination and favoritism are less palpable.
This is also due to the fact that \textsf{German credit} contains less proper causal relations, as reflected in the higher sparsity of the \SBCN. A consequence of this sparsity is also that the random walks generally need more steps to reach one of the two decisions.
In \textsf{Census-income} dataset, we observe favoritism with respect to married and asian\_pacific individuals.

\pagebreak
\subsection{Genuine requirement}
We next focus on genuine requirement (or explainable discrimination).
Table~\ref{tab:edsAdult}  reports some examples of fraction of explainable discrimination (both positive and negative) on the \textsf{Adult} dataset. We can see how some fractions of discrimination against protected groups, can be ``explained'' by intermediate nodes such as having a low education profile, or a simple job. In the case these intermediate nodes are considered legally grounded, then one cannot easily file a discrimination claim.

Similarly, we can observe that the favoritism towards groups such as married men, is  explainable, to a large extent, by higher education and good working position, such as managerial or executive roles.

\begin{table}[t!]
\caption{Top-5 and bottom-5 groups by discrimination score $ds^-(v)$ in \textsf{Adult} dataset. \label{tab:adultdscores}}
\centering
\begin{small}
\begin{tabular}{r|c c c|}
\multicolumn{1}{c}{} & \multicolumn{1}{c}{$ds^-(v)$}   &  $as^{-}(v)$ & \multicolumn{1}{c}{$as^{+}(v)$}\\
\cline{2-4}
\textsf{\scriptsize relationship\_Unmarried} &  1     & 1.164   &   -   \\
\textsf{\scriptsize marital\_status\_Never\_married} &  0.996    & 1.21  &  2.14 \\
\textsf{\scriptsize age\_Young} &  0.995   & 2.407   &  3.857  \\
\textsf{\scriptsize race\_Black} &  0.994   &  2.46 &   4.4   \\
\textsf{\scriptsize sex\_Female} &  0.98   &  2.60 &   3.76   \\
\cline{2-4}
\end{tabular}

\smallskip

\begin{tabular}{r|c c c|}
\multicolumn{1}{c}{\hspace{-8mm} }& \multicolumn{1}{c}{$ds^-(v)$}   &  $as^{-}(v)$ & \multicolumn{1}{c}{$as^{+}(v)$}\\
\cline{2-4}
\textsf{\scriptsize relationship\_Husband} &   0   & -   &   2   \\
\hspace{-8mm} \textsf{\scriptsize marital\_status\_Married\_civ\_spouse} & 0  & -  &  2.06 \\
\textsf{\scriptsize sex\_Male} &    0  &  -   &  3.002  \\
\textsf{\scriptsize native\_country\_India}   & 0.002   &  4.0 &   3.25   \\
\textsf{\scriptsize age\_Old} &   0.018    &  2.062 &   2.14   \\
\cline{2-4}
\end{tabular}
\end{small}

\bigskip

\caption{Top-5 and bottom-4 groups by discrimination score $ds^-(v)$ in \textsf{German credit}. We report only  the bottom-4, because there are only 4 nodes in which $ds^+(v) > ds^-(v)$. \label{tab:germanscores}}
\centering
\begin{small}
\begin{tabular}{r|c c c|}
\multicolumn{1}{c}{} & \multicolumn{1}{c}{$ds^-(v)$}   &  $as^{-}(v)$ & \multicolumn{1}{c}{$as^{+}(v)$}\\
\cline{2-4}
\textsf{\scriptsize residence\_since\_le\_1d6} &  1   & 6.0  &  -    \\
\textsf{\scriptsize residence\_since\_gt\_2d8} &  1  & 2.23  & -  \\
\hspace{-6mm}\textsf{\scriptsize residence\_since\_from\_1d6\_le\_2d2} &  1   &  6.0  &    -    \\
\textsf{\scriptsize age\_gt\_52d6} & 0.86   &  3.68 &  4.0    \\
\textsf{\scriptsize personal\_status\_male\_single} & 0.791   &  5.15 &  5.0    \\
\cline{2-4}
\end{tabular}

\smallskip

\begin{tabular}{r|c c c|}
\multicolumn{1}{c}{\hspace{-8mm} }& \multicolumn{1}{c}{$ds^-(v)$}   &  $as^{-}(v)$ & \multicolumn{1}{c}{$as^{+}(v)$}\\
\cline{2-4}
\textsf{\scriptsize job\_unskilled\_resident} &   0  & -  &  2.39    \\
\hspace{-6mm}\textsf{\scriptsize personal\_status\_male\_mar\_or\_wid} & 0.12  & 8.0  & 4.4 \\
\textsf{\scriptsize age\_le\_30d2} &  0.186  &  7.0  &    3.34     \\
\textsf{\scriptsize personal\_status\_female\_} &   0.294  &  6.48 &  4.4    \\
\cline{2-4}
\multicolumn{1}{c}{\textsf{\scriptsize div\_or\_sep\_or\_mar}} & \multicolumn{3}{c}{} \\
\end{tabular}
\end{small}

\bigskip

\caption{Top-5 and bottom-5 groups by discrimination score $ds^-(v)$ in \textsf{Census-income} dataset. \label{tab:incomescores}}
\centering
\begin{small}
\begin{tabular}{r|c c c|}
\multicolumn{1}{c}{} & \multicolumn{1}{c}{$ds^-(v)$}   &  $as^{-}(v)$ & \multicolumn{1}{c}{$as^{+}(v)$}\\
\cline{2-4}
\textsf{\scriptsize MIGSAME\_Not\_in\_universe\_under\_1\_year\_old} &  0.71  & 4.09   &  8.82  \\
\textsf{\scriptsize WKSWORK\_94\_5\_inf} &  0.625 & 3.0  &  6.76 \\
\textsf{\scriptsize AWKSTAT\_Not\_in\_labor\_force} &  0.59  &  2.0 &  6.16  \\
\textsf{\scriptsize VETYN\_0\_5\_20\_5} & 0.58 & 1.01   &  5.17 \\
\textsf{\scriptsize MARSUPWT\_3188\_455\_4277\_98} &  0.55 & 5.0 &  9.25   \\
\cline{2-4}
\end{tabular}

\smallskip

\begin{tabular}{r|c c c|}
\multicolumn{1}{c}{\hspace{-8mm} }& \multicolumn{1}{c}{$ds^-(v)$}   &  $as^{-}(v)$ & \multicolumn{1}{c}{$as^{+}(v)$}\\
\cline{2-4}
\textsf{\scriptsize AHGA\_Doctorate\_degreePhD\_EdD} &   0   & -   &  3.07   \\
\hspace{-8mm} \textsf{\scriptsize AMARITL\_Married\_A\_F\_spouse\_present} & 0  & -  &  4.49 \\
\textsf{\scriptsize AMJOCC\_Sales} &   0  &  -   &  2.0  \\
\textsf{\scriptsize ARACE\_Asian\_or\_Pacific\_Islander}   &  0 &  -  &  6.47 \\
\textsf{\scriptsize VETYN\_20\_5\_32\_5} &  0  &  - &   5.89   \\
\cline{2-4}
\end{tabular}
\end{small}

\end{table}

\begin{table}[t!]
\caption{Fraction of explainable discrimination for some exemplar pair of nodes in the \textsf{Adult} dataset. \label{tab:edsAdult}}


\centering
\begin{scriptsize}
\begin{tabular}{ccc}
\hspace{-2mm}Source node & Intermediate  & $fed^{-}(v)$ \\
\hline
\hspace{-2mm}\textsf{race\_Amer\_Indian\_Eskimo} &  \textsf{education\_HS\_grad}   &  $0.481$ \\
\hspace{-2mm}\textsf{sex\_Female} &  \textsf{occupation\_Other\_service}  &  $0.310$ \\
\hspace{-2mm}\textsf{age\_Young} &  \textsf{occupation\_Other\_service}  & $0.193$\\
\hspace{-2mm}\textsf{relationship\_Unmarried} & \textsf{education\_HS\_grad}  & $0.107$  \\
\hspace{-2mm}\textsf{race\_Black} &  \textsf{education\_11th}   &   $0.083$ \\
\hline
\end{tabular}

\bigskip

\begin{tabular}{ccc}
Source node & Intermediate  & $fed^{+}(v)$ \\
\hline
\textsf{relationship\_Husband} &  \textsf{occupation\_Exec\_managerial}  &   $0.806$  \\
\textsf{sex\_Male} &  \textsf{occupation\_Exec\_managerial} &  $0.587$ \\
\textsf{native\_country\_Iran} &   \textsf{education\_Bachelors}   &  $0.480$  \\
\textsf{native\_country\_India} &  \textsf{education\_Prof\_school}  &   $0.415$   \\
\textsf{age\_Old} &  \textsf{occupation\_Exec\_managerial}  &   $0.39$   \\
\hline
\end{tabular}
\end{scriptsize}
\vspace{-4mm}
\end{table}

%
%
%


\subsection{Subgroup and Individual Discrimination}
We next turn our attention to subgroup and individual discrimination discovery. Here the problem is to assign a score of discrimination not to a single node (a group), but to multiple nodes (representing the attributes of an individual or a subgroup of citizens). In Section \ref{sec:MID} we have introduced based on the PageRank of the positive and negative decision, $ppr(\delta^+)$ and $ppr(\delta^-)$ respectively, personalized on the nodes of interest.
Figure \ref{fig:scatt1} presents a scatter plot of $ppr(\delta^+)$ versus $ppr(\delta^-)$ for each individual in the \textsf{German credit} dataset. We can observe the perfect separation between individuals corresponding to a high personalized PageRank with respect to the positive decision, and those associated with a high personalized PageRank relative to the negative decision.

\begin{figure}[t!]
\centering
\includegraphics[width=.9\linewidth]{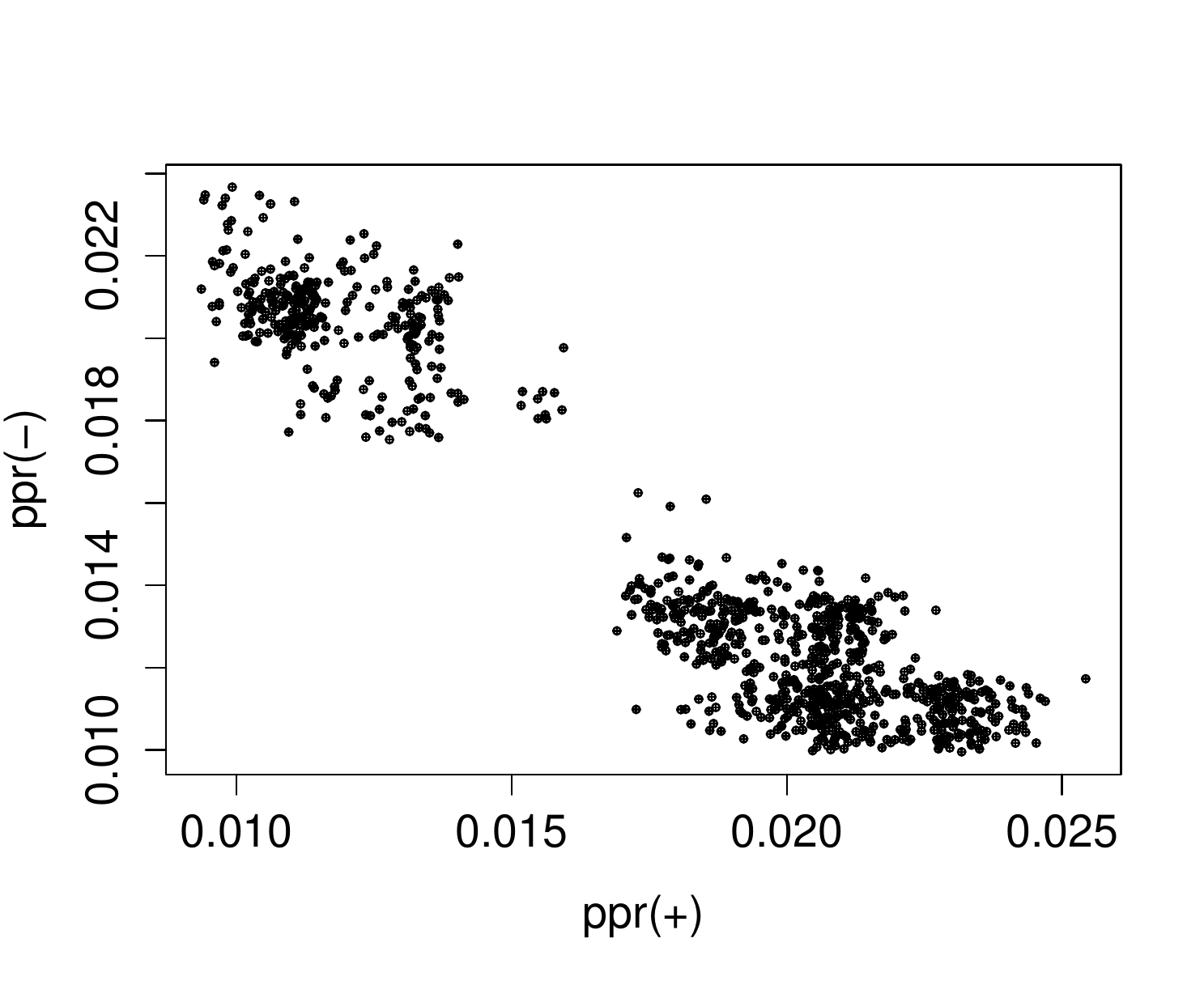}
\vspace{-2mm}
\caption{Scatter plot of $ppr(\delta^+)$ versus $ppr(\delta^-)$ for each individual in the \textsf{German credit} dataset.\label{fig:scatt1}}
\vspace{-1mm}
\end{figure}

Such good separation is also reflected in the \emph{generalized discrimination score} (Definition 4) that we obtain by combining $ppr(\delta^+)$ versus $ppr(\delta^-)$.

In Figure \ref{germanhisto} we report the distribution of the generalized discrimination score $gds^-$ for the population of the \textsf{German credit} dataset: we can make a note of the clear separation between the two subgroups of the population.

\begin{figure}[t!]
\centering
\includegraphics[width=.9 \linewidth]{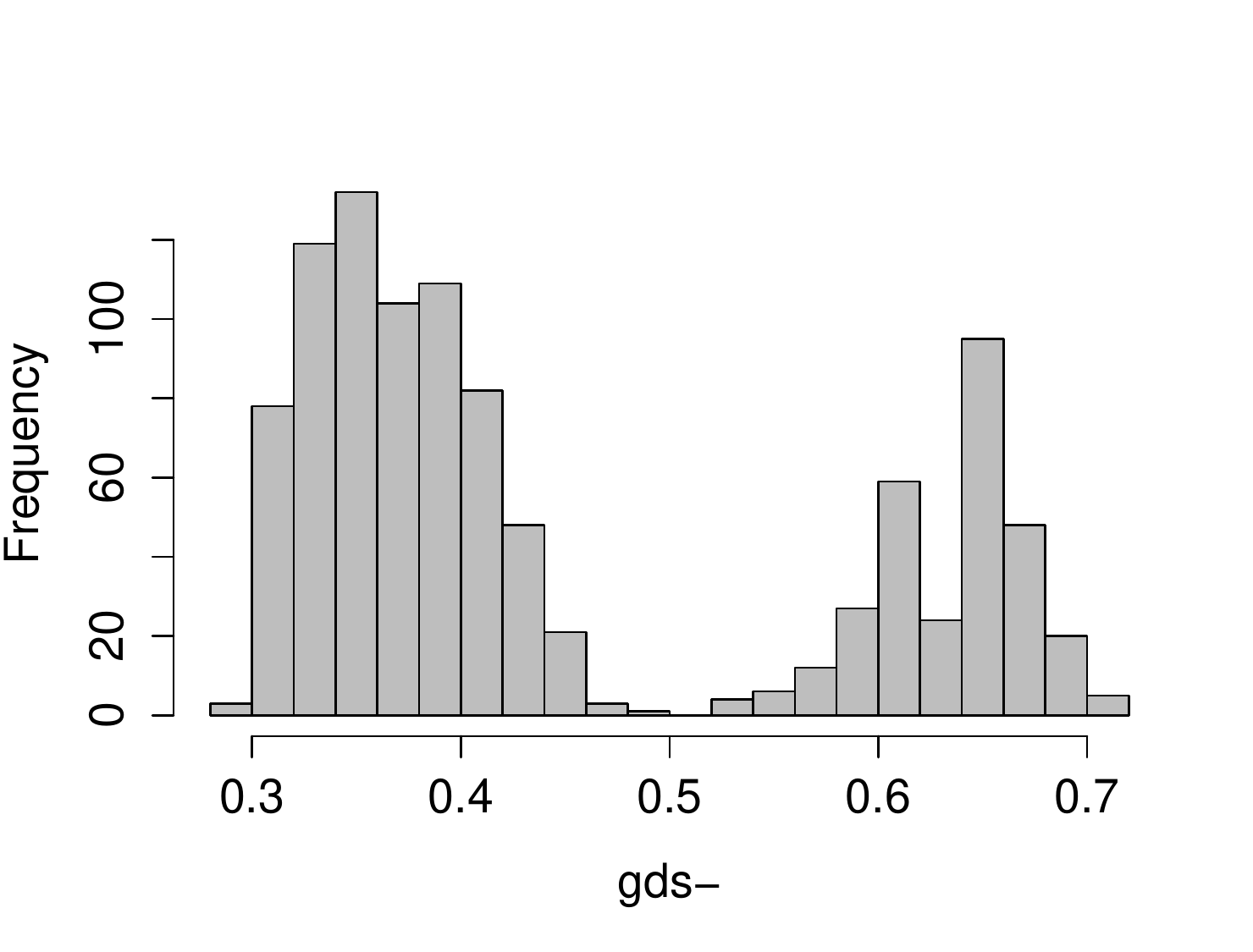}
\vspace{-4mm}
\caption{Individual discrimination: histogram representing the distribution of the values of the generalized discrimination score $gds^-$ for the population of the \textsf{German credit} dataset.\label{germanhisto}}
\vspace{2mm}
\end{figure}

In the \textsf{Adult} dataset (Figure \ref{fig:scatt2}) we do not observe the same neat separation in two subgroups as in the \textsf{German credit} dataset, also due to the much larger number of points. Nevertheless, as expected, $ppr(\delta^+)$ and $ppr(\delta^-)$ still exhibit anticorrelation. In Figure \ref{fig:scatt2} we also use colors to show two different groups:  red dots are for \textsf{age\_Young} and blue dots are for \textsf{age\_Old} individuals. As expected we can see that the red dots are distributed more in the area of higher $ppr(\delta^-)$.

\begin{figure}[h!]
\centering
\includegraphics[width=.9\linewidth]{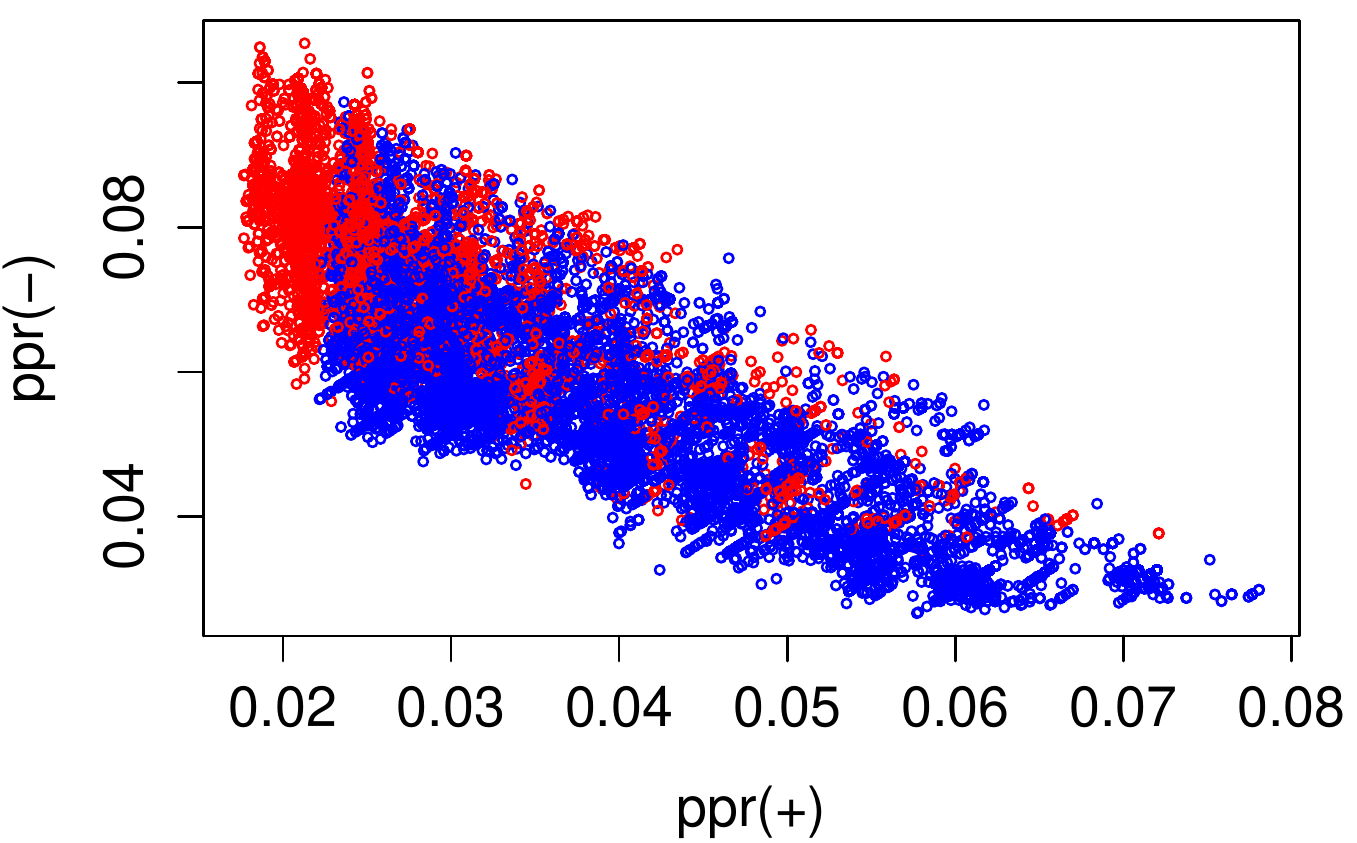}
\caption{Individual discrimination: scatter plot of $ppr(\delta^+)$ versus $ppr(\delta^-)$ for each individual in the \textsf{Adult} dataset. Red dots are for \textsf{age\_Young} and blue dots are for \textsf{age\_Old}. \label{fig:scatt2}}
\vspace{2mm}
\end{figure}


The plots in Figure \ref{fig:multipleccdf}  have a threshold $t \in [0,1]$ on the X-axis, and the fraction of tuples having $gds^-() \geq t$ on the Y-axis, and they show this for different subgroups. The first plot, from the \textsf{Adult} dataset, shows the group female, young, and young female. As we can see the individuals that are both young and female have a higher  generalized discrimination score.
Similarly, the second plot shows the groups old, single male, and old single male from the \textsf{German credit} dataset. Here we can observe much lower rates of discrimination with only $1/5$ of the corresponding populations having $gds^-()  \geq 0.5$, while in the previous plot it was more than 85\%.

\begin{figure}[t!]
\centering
\includegraphics[width=\linewidth]{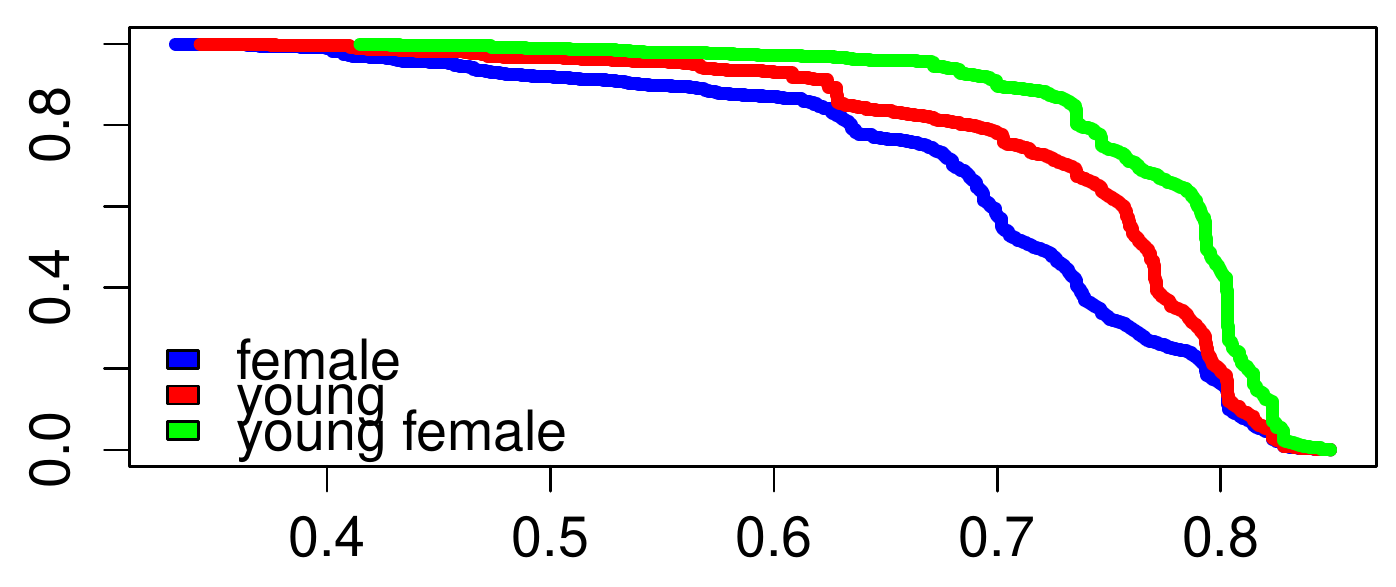}
\includegraphics[width=\linewidth]{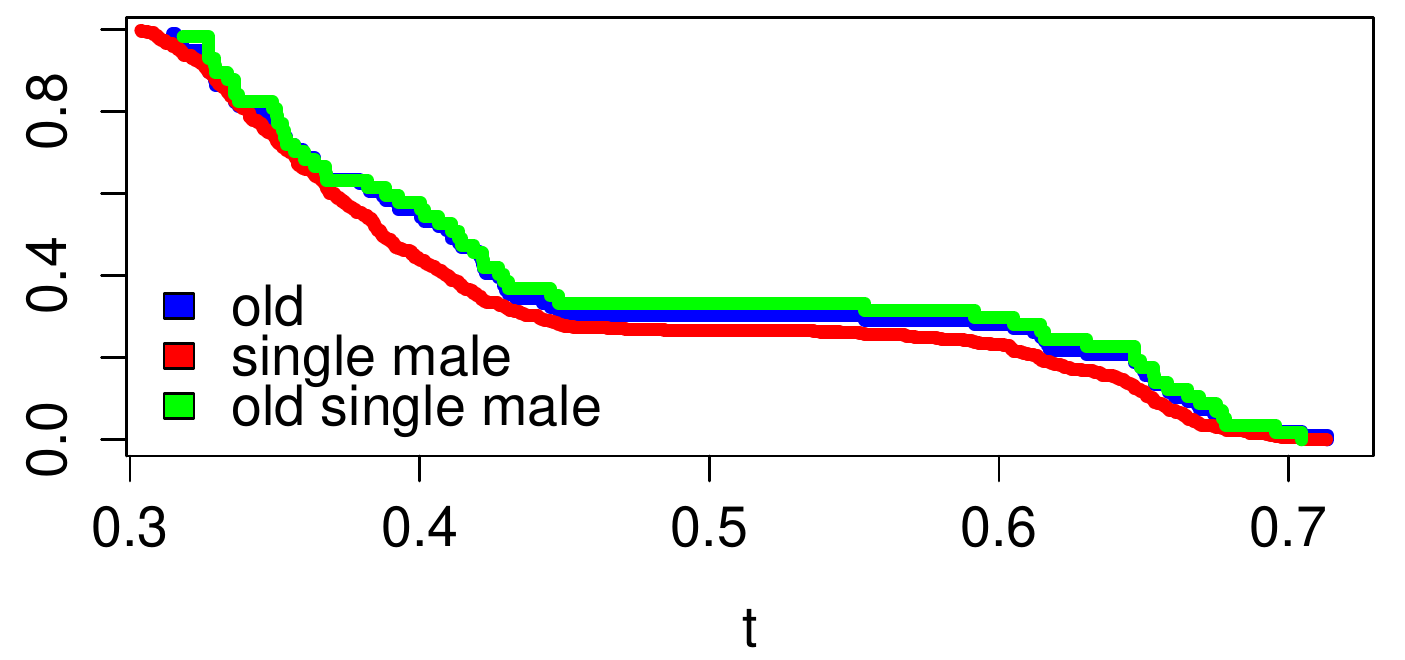}
\caption{Subgroup discrimination: plots reporting a threshold $t \in [0,1]$ on the X-axis, and the fraction of tuples having $gds^-() \geq t$ on the Y-axis. The top plot is from \textsf{Adult}, while the bottom is from \textsf{German credit}. \label{fig:multipleccdf}}
\end{figure}

\subsection{Comparison with prior art}\label{subsec:exp_comp}

In this section, we discuss examples in which our causation-based method draws different conclusions from the correlation-based methods presented in \cite{peder2008,pederruggi2009,RPT2010} using the same datasets and  the same protected groups \footnote{We could not compare with~\cite{Mancuhan2014} due to repeatability issues.}.

\begin{figure}[h]
\centering
\begin{small}
\begin{tabular}{c|cc|c}
\multicolumn{1}{c}{} &  \multicolumn{2}{c}{decision}\\ \cline{2-3}
\multicolumn{1}{c}{} & \multicolumn{1}{c}{-} & \multicolumn{1}{c}{+} \\
\hline
foreign\_worker=yes & $298$ & $667$ \ & \ $968$\\
foreign\_worker=no & $2$ & $30$ \ & \ $32$\\ \hline
& \ $300$  & $700$ \ & \ $1000$\\
\end{tabular}
\quad \quad \quad
\begin{tabular}{c}
\\
$p_1 = 298/968=0.307$\\
$p_2 = 2/32=0.0625$\\
$\mathit{RD} = p_1 - p_2 = 0.244$
\end{tabular}
\end{small}
\caption{Contingency table for \textsf{foreign\_worker} in the \textsf{German credit} dataset.\label{fig:ct1}}

\vspace{3mm}

\begin{small}
\begin{tabular}{c|cc|c}
\multicolumn{1}{c}{} &  \multicolumn{2}{c}{decision}\\ \cline{2-3}
\multicolumn{1}{c}{} & \multicolumn{1}{c}{-} & \multicolumn{1}{c}{+} \\
\hline
race=black & $4119$ & $566$ \ & \ $4685$\\
race$\neq$black & $33036$ & $11121$ \ & \ $44157$\\ \hline
& \ $37155$  & $11687$ \ & \ $48842$\\
\end{tabular}
\quad \quad \quad
\begin{tabular}{c}
\\
$p_1 = 4119/4685=0.879$\\
$p_2 = 33036/44157=0.748$\\
$\mathit{RD} = p_1 - p_2 = 0.13$
\end{tabular}
\end{small}
\caption{Contingency table for \textsf{race\_black} in the \textsf{Adult} dataset.\label{fig:ct2}}

\vspace{-3mm}
\end{figure}

 The first example involves the \textsf{foreign\_worker} group from \textsf{German Credit} dataset, whose contingency table is reported in Figure~\ref{fig:ct1}. Following the approaches of  \cite{peder2008,pederruggi2009,RPT2010}  the \textsf{foreign\_worker} group results strongly discriminated. In fact Figure~\ref{fig:ct1} shows an $RD$ value (\emph{risk difference}) of 0.244 which is considered a strong signal: in fact $RD > 0$ is already considered discrimination \cite{RPT2010}.


 However, we can observe that the \textsf{foreign\_worker} group is per se not very significant, as it contains 963 tuples out of 1000 total. In fact our causal approach does not detect any discrimination with respect to \textsf{foreign\_worker} which appears as a disconnected node in the \SBCN.

  The second example is in the opposite direction. Consider the \textsf{race\_black} group from \textsf{Adult} dataset whose contingency table is shown in Figure~\ref{fig:ct2}. Our causality-based approach detects a very strong signal of discrimination ($ds^-() = 0.994$), while the approaches of~\cite{peder2008,pederruggi2009,RPT2010} fail to discover discrimination against black minority when the value of minimum support threshold  used for extracting classification rules is more than $10\%$. On the other hand, when such minimum support threshold is kept lower, the number of extracted rules might be overwhelming. 

Finally, we turn our attention to the famous example of false-positive discrimination case happened at Berkeley in 1973, that we discussed in Section 1.
Figure~\ref{fig:SBCNBAD} presents the \SBCN\ extracted by our approach from \textsf{Berkeley Admission Data}. Interestingly, we observe that there is no direct edge between node \textsf{sex\_Female} and \textsf{Admission\_No}. And \textsf{sex\_Female} is connected to node \textsf{Admission\_No} through nodes of \textsf{Dep\_C}, \textsf{Dep\_D}, \textsf{Dep\_E}, and \textsf{Dep\_F}, which are exactly the departments that have lower admission rate. By running our random walk-based methods over \SBCN\, we obtain the value of $1$ for the score of explainable discrimination confirming that apparent discrimination in this dataset is due the fact that women tended to apply to departments with lower rates of admission.

\begin{figure}[t!]
\centering
\includegraphics[width=\linewidth]{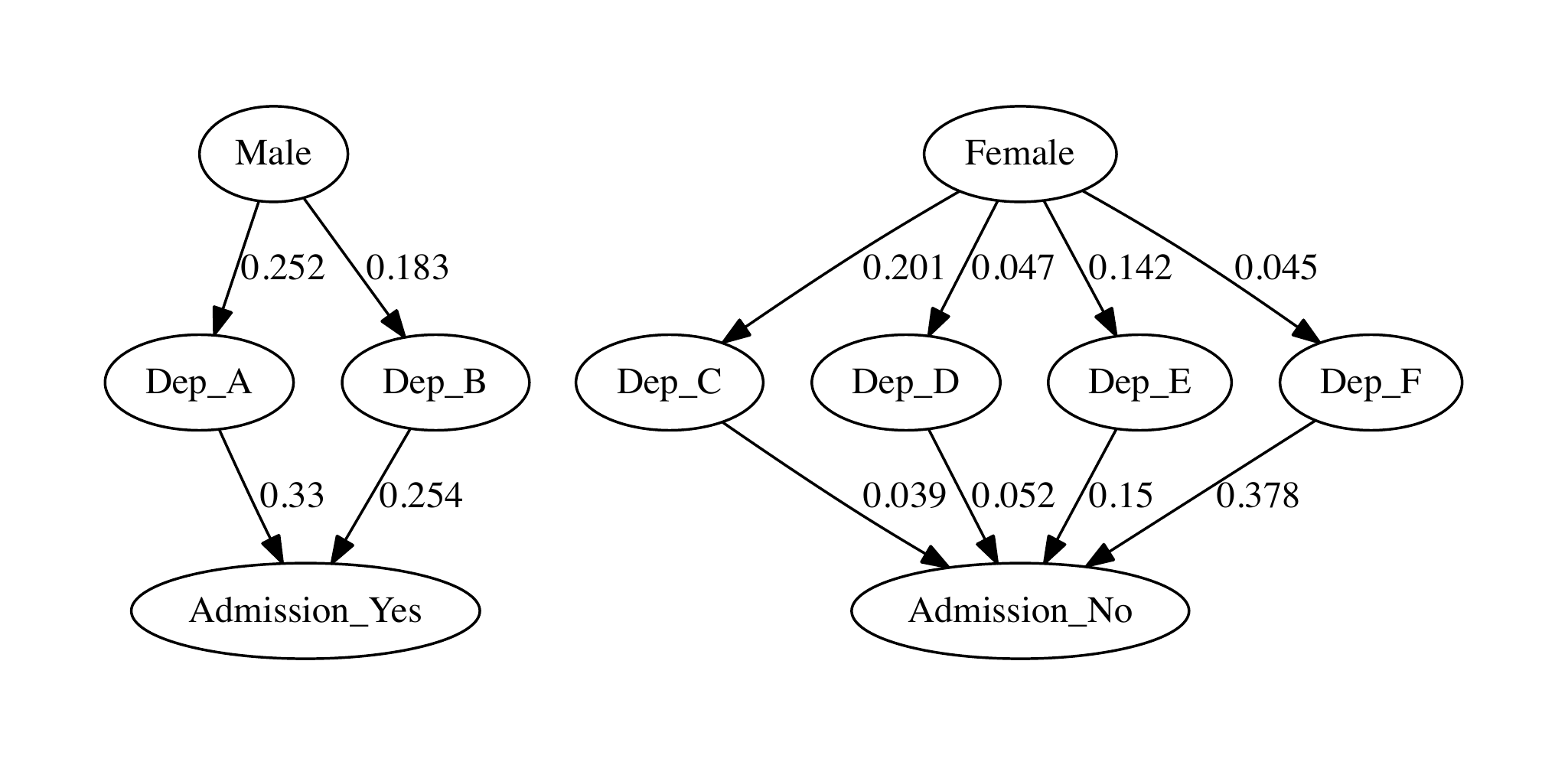}
\caption{The \SBCN\ constructed from \textsf{Berkeley Admission Data} dataset.\label{fig:SBCNBAD}}
\end{figure}

Similarly, we observe that there is no direct edge between node \textsf{sex\_Male} and \textsf{Admission\_Yes}. And \textsf{sex\_Male} is connected to node \textsf{Admission\_Yes} through nodes of \textsf{Dep\_A}, and \textsf{Dep\_B}, which are exactly the departments that have higher admission rate. By running our random walk-based methods over \SBCN\, we obtain the value of $1$ for the score of explainable discrimination confirming that apparent favoritism towards men is due to the fact that men tended to apply to departments with higher rates of admission.

\begin{figure}[h]
\centering
\begin{small}
\begin{tabular}{c|cc|c}
\multicolumn{1}{c}{} &  \multicolumn{2}{c}{decision}\\ \cline{2-3}
\multicolumn{1}{c}{} & \multicolumn{1}{c}{-} & \multicolumn{1}{c}{+} \\
\hline
gender=female & $1278$ & $557$ \ & \ $1835$\\
gender=male & $1493$ & $1158$ \ & \ $2651$\\ \hline
& \ $2771$  & $1715$ \ & \ $4486$\\
\end{tabular}
\quad \quad \quad
\begin{tabular}{c}
\\
$p_1 = 1278/1835=0.696$\\
$p_2 = 1493/2651=0.563$\\
$\mathit{RD} = p_1 - p_2 = 0.133$
\end{tabular}
\end{small}
\caption{Contingency table for \textsf{female} in the \textsf{Berkeley Admission Data} dataset.\label{fig:BADct}}
\end{figure}

However, following the approaches of ~\cite{peder2008,pederruggi2009,RPT2010}, the contingency table shown in Figure~\ref{fig:BADct} can be extracted from \textsf{Berkeley Admission Data}. As shown in Figure~\ref{fig:BADct}, the value of $\mathit{RD}$ suggests a signal of discrimination versus women. 

This highlights once more the pitfalls of correlation-based approaches to discrimination detection and the need for a principled causal approach, as the one we propose in this paper.

\section{Conclusions} \label{sec:conclusions}
\enlargethispage*{2\baselineskip}
Discrimination discovery from databases is a fundamental task in understanding past and current trends of discrimination, in judicial dispute resolution in legal trials, in the validation of micro-data before they are publicly released.
While discrimination is a causal phenomenon, and any discrimination claim requires to prove a causal relationship, the bulk of the literature on data mining methods for discrimination detection is based on correlation reasoning.

In this paper, we propose a new discrimination discovery approach that is able to deal with different types of discrimination in a single unifying framework.  It is the first discrimination detection method grounded in probabilistic causal theory.
We define a method to extract a graph representing the causal structures found in the database, and then we propose several random-walk-based methods over the causal structures, addressing a range of different discrimination problems.

Our experimental assessment confirmed the great flexibility of our proposal in tackling different aspects of the discrimination detection task, and doing so with very clean signals, clearly separating discrimination cases.

\bigskip

\noindent \textbf{Repeatability.} Our software together with the datasets used in the experiments are available at \url{http://bit.ly/1GizSIG}.


\end{document}